\documentclass[11pt]{amsart}

\usepackage[USenglish]{babel}

\usepackage{amsmath,amsthm,amssymb,amscd}
\usepackage{booktabs}
\usepackage{url}

\usepackage{MnSymbol}

\usepackage{tikz}
\usetikzlibrary{arrows,shapes.geometric,positioning,decorations.markings}

\usepackage[mathscr]{eucal}
\usepackage[normalem]{ulem}
\usepackage{latexsym,youngtab}
\usepackage{multirow}
\usepackage{epsfig}
\usepackage{parskip}

\usepackage[margin=2cm]{geometry}
\textwidth=16cm
\topmargin=0mm
\oddsidemargin=0mm
\evensidemargin=0mm
\textheight=22cm


\setcounter{MaxMatrixCols}{15}
\usepackage{color}

\newtheoremstyle{custom}
  {3pt}
  {3pt}
  {\slshape}
  {}
  {\bfseries}
  {.}
  { }
   {}
\theoremstyle{custom}
\newtheorem{theorem}{Theorem}[section]
\newtheorem{proposition}[theorem]{Proposition}

\newtheorem{proposition/definition}[theorem]{Proposition/Definition}

\theoremstyle{definition}

\theoremstyle{remark}
\newtheorem{remark}[theorem]{Remark}




\newcommand{\stack}[2]{\ensuremath{\genfrac{}{}{0pt}{}{#1}{#2}}} 

\newtheoremstyle{exercise}
  {3pt}
  {6pt}
  {}
  {}
  {\bfseries}
  {:}
  { }
   {}
\theoremstyle{exercise}
\newtheorem{exercise}[theorem]{Exercise}
\newtheoremstyle{exercises}
  {3pt}
  {6pt}
  {}
  {}
  {\bfseries}
  {:}
  {\newline}
   {}

\theoremstyle{exercise}
\newtheorem{exercises}[theorem]{Exercises}








\input epsf
\def\boxit#1{\vbox{\hrule height1pt\hbox{\vrule width1pt\kern3pt
  \vbox{\kern3pt#1\kern3pt}\kern3pt\vrule width1pt}\hrule height1pt}}



\def\BR{\mathbb R}

\def\tdim{{\rm dim}}

\def\hd{,...,}

\newcommand{\abs}[1]{\lvert#1\rvert}

\def\11{\mathbf 1}

\def\s{\sigma}

\def\d{\delta}

\def\ot{{\mathord{ \otimes } }}

\def\otc{{\mathord{\otimes\cdots\otimes} }}

\def\La#1{\Lambda^{#1}}



\def\s{\sigma}

\def\FS{\mathfrak  S}

\def\ol{\overline}

\def\BR{\mathbb  R}

\def\hd{, \hdots ,}

\def\La#1{\Lambda^{#1}}

\def\tdeg{\operatorname{deg}}

\def\tdim{\operatorname{dim}}

\def\be{\begin{equation}}
\def\ene{\end{equation}}

\DeclareMathOperator\tspan{span}

\def\tspan{{\rm span}}












\def\bx{{\bold x}}

\DeclareMathOperator{\sect}{Sec}

\renewcommand{\d}{\delta}





\def\bx{\bold x}


\def\La#1{\bigwedge^{#1}}

\usepackage[utf8]{inputenc}
\usepackage{amsmath}
\usepackage{amsthm}
\usepackage{amsfonts}
\usepackage{indentfirst}
\usepackage{xcolor}

\usepackage[sort]{cite}

 \newcommand{\mc}{\mathcal}

\newcommand{\average}[1]{\left\langle#1\right\rangle}
\newcommand{\jbra}[1]{\langle#1\rvert}
\newcommand{\jket}[1]{\lvert#1\rangle}
\newcommand{\jbraket}[2]{\langle#1\vert#2\rangle}

 \begin{document}

\author{Hang Huang,  J. M. Landsberg, and Jianfeng Lu}

\address[H. Huang, J.M. Landsberg]{Department of Mathematics, Texas A\&M University, College Station, TX 77843-3368, USA}
\email[H. Huang]{huanghang1109@gmail.com}
\email[J.M. Landsberg]{jml@math.tamu.edu}

\address[J. Lu]{Departments of Mathematics, Physics, and Chemistry, Duke
  University, Durham, NC 27708, USA}
\email[J. Lu]{jianfeng@math.duke.edu}

\title[Geometry of Backflow Ansatz]{Geometry of backflow transformation ansatz for quantum many-body fermionic wavefunctions}

\keywords{Backflow transformation, fermionic wavefunction, anti-symmetry, secant variety}

\thanks{Landsberg supported by NSF grant AF-1814254. Lu supported in part by NSF grants DMS-2012286 and CHE-2737263. 
This project is an outcome of the IPAM program: {\it Tensor Methods and Emerging Applications to the Physical and Data Sciences}
  March 8, 2021 -- June 11, 2021}

\subjclass[2010]{81V74, 70G75, 14N07}

\begin{abstract}
  Wave function ansatz based on the backflow transformation are widely
  used to parametrize anti-symmetric multivariable functions for
  many-body quantum problems. We study   the geometric
  aspects of such ansatz, in particular we show that in general
  totally antisymmetric polynomials cannot be efficiently represented
  by backflow transformation ansatz at least in the category of
  polynomials. In fact, one needs a linear combination of at least 
  $O(N^{3N-3})$ determinants to represent a generic totally
  antisymmetric polynomial.  Our proof is based on  bounding the dimension of the source of the ansatz
  from above and bounding the dimension of the target from below.
\end{abstract}

\maketitle

\section{Introduction}

Finding efficient numerical methods for quantum many-body systems has
been a long standing challenge, due to the notorious curse of
dimensionality and the Fermionic sign problems. Many approaches have been proposed
over the years; a popular class of methods is known as the variational
quantum Monte Carlo methods. The basic idea is as follows: let $H$
denote the Hamiltonian operator of a quantum system, choose a class of
functions $\mc{F}$ as a variational ansatz and solve  
\begin{equation}
    E_{\mc{F}} := \inf_{\Psi \in \mc{F}} \frac{\jbra{\Psi} H \jket{\Psi}}{\jbraket{\Psi}{\Psi}} 
\end{equation}
for $\Psi$ and $E_{\mc{F}}$. 
For Fermionic systems, due to the Pauli  exclusion principle, the wave function has to be totally antisymmetric. Thus, for a system with $N$ particles one takes $\mc{F} \subset \bigwedge^{N} L^2(\BR^3)$, so that the above $E_{\mc{F}}$ gives a variational upper bound for the true ground state energy. Here $\bigwedge^{N} L^2(\BR^3)$ denotes the anti-symmetric tensor product of $n$ copies of $L^2(\BR^3)$; the single particle Hilbert space is $L^2(\BR^3)$, i.e., we assume that each particle lives in $\BR^3$ and have neglected the spin degree of freedom for simplicity.

The question now becomes the choice of $\mc{F}$. The most
straightforward approach is to take
$\mc{F} = \bigwedge^{N} L^2(\BR^3)$, or more precisely, for numerical
purposes, one chooses a finite dimensional subspace
$V \subset L^2(\BR^3)$ and take $\mc{F} = \bigwedge^{N} V$, in the
spirit of Galerkin's method in numerical analysis. This is however not
practical for actual computations as the dimension of $\mc{F}$ grows
exponentially with $N$, known as the curse of dimensionality.
  Therefore, a
smaller class of functions needs to be fixed for the variational
search.

The most well-known choice of $\mc{F}$, which is essentially the starting point of quantum chemistry, is the collection of Slater determinants, i.e., 
\begin{equation}
    \mc{F} = \bigl\{  \varphi_1 \wedge \varphi_2 \wedge \cdots \wedge \varphi_N \mid \varphi_i \in L^2(\BR^3),  \average{\varphi_i, \varphi_j}_{L^2(\BR^3)} = \delta_{ij}, i, j = 1, \cdots, n \bigr\}. 
\end{equation}
This leads to the celebrated Hartree-Fock method. In practice one also
uses a finite dimensional approximation to $L^2(\BR^3)$.  While useful
as a first approximation, the class of Slater determinants is often
too small to capture a good approximation to the ground state
energy. The difference between the result of the Hartree-Fock method
and the true ground state energy is called the correlation energy;
more complicated variational ansatz have been proposed to reduce the
error.

One of the first approaches to go beyond the Hartree-Fock ansatz is known as the Slater-Jastrow wave function, for which one considers the product of a Slater determinant with a totally symmetric function $g$ and hence the product is anti-symmetric. The function $g$ is often parametrized as 
\begin{equation}
  g(x_1, \cdots, x_N) = \exp \Bigl( \sum_{i \leq j } U(\abs{x_i- x_j}) \Bigr),
\end{equation}
where $U$ is some function on $\BR$. The $g$ given above is obviously
totally symmetric, while more general ansatz for $g$ have also  been  
proposed and studied. Unlike the Hartree-Fock method, it is no longer
possible to explicitly evaluate the Rayleigh quotient, and thus the
ansatz is optimized by Monte Carlo approaches in practice, i.e.,
variational quantum Monte Carlo methods.  One 
generalization   uses linear combinations of Slater-Jastrow
wave functions, which is often referred to as the
multi-configurational approach in the  variational
quantum Monte Carlo methods   literature.

Another direction is to change the Slater determinants to some other anti-symmetric functions; for example, pfaffians (when $N$ is even), Vandemonde determinants, and determinants with backflow transformations as defined below. 
This has become a very active field in recent years thanks to the rise of neural networks as
a  versatile ansatz for high dimensional functions, after the influential work \cite{carleo2017solving} of parameterizing many-body wave functions using neural networks. Several variational classes have been proposed by replacing components in the anti-symmetric function ansatz by neural networks, see e.g., \cite{luo2019backflow, choo2020fermionic,
han2019solving, hermann2020deep, pfau2020abinitio, acevedo2020vandermonde, hutter2020representing, spencer2020better, schatzle2021convergence}. 
While the details of these ansatz differ, the general framework is based on the backflow transformation originally proposed by Feynman and Cohen \cite{feynman1956energy}. 
To introduce the ansatz,  define the function class 
\begin{equation*}
    \mathcal{S} = \bigl\{ \varphi \in L^2\bigl(\mathbb{R}^3 \times (\mathbb{R}^3 \otimes \mathbb{R}^{N-1}) \bigr) \mid 
    \varphi(\bx; \ol{\boldsymbol{y}}) = \varphi(\bx; \sigma \ol{\boldsymbol{y}}), \quad \forall \sigma \in \FS_{N-1} \bigr\}.
\end{equation*}
Thus, functions in $\mathcal{S}$ are totally symmetric with respect to the second argument in $\mathbb{R}^3 \otimes \mathbb{R}^{N-1}$ (the permutation group $\FS_{N-1}$ acts on $\mathbb{R}^{N-1}$).
It is easy to check then for $\varphi_i \in \mathcal{S}$, $i = 1, \cdots, N$, the following function on $\mathbb{R}^3 \otimes \mathbb{R}^N$ is totally anti-symmetric: 
\begin{equation*}
    \Phi_{\text{BF}}[\varphi_1, \ldots, \varphi_N](\bx_1, \cdots, \bx_N) = \det 
    \begin{pmatrix}
        \varphi_1(\bx_1; \ol{\bx_{-1}}) & \cdots & \varphi_N(\bx_1; \ol{\bx_{-1}}) \\
        \vdots & \ddots & \vdots \\
        \varphi_1(\bx_N; \ol{\bx_{-N}}) & \cdots & \varphi_N(\bx_N; \ol{\bx_{-N}}) \\
    \end{pmatrix}
\end{equation*}
where $\ol{\bx_{-i}}:=(\bx_1, \cdots, \bx_{i-1}, \bx_{i+1}, \cdots, \bx_N)$. This generalizes the Slater determinants, which corresponds to the case that $\varphi_i$ only depends on the first variable, the Slater-Jastrow wave functions, which corresponds to absorbing a $g^{1/N}$ factor into $\varphi_i$'s,  and the Vandermonde determinants.  
We call $\Phi_{\text{BF}}$ the {\it ansatz map}.

In this work, we focus on the study of the ansatz given by the backflow transformation:
\begin{equation*}
    \mc{F}_{\text{BF}}^N = \bigl\{ \Phi_{\text{BF}}[\varphi_1, \ldots, \varphi_N] \mid 
    \varphi_i \in \mc{S}, i = 1, \ldots, N \bigr\}.
\end{equation*}
We are interested in the representation power of the  class of functions
$ \mc{F}_{\text{BF}}^N$. In particular, we ask for a given totally antisymmetric function $\Psi: \mathbb{R}^{3} \otimes \mathbb{R}^N \to \mathbb{R}$, whether it belongs to $\mc{F}_{\text{BF}}^N$, i.e., whether it is possible to find $\{\varphi_i\}$, such that $\Psi = \Phi_{\text{BF}}[\varphi_1, \ldots, \varphi_N]$. 
This question may  be asked in several different flavors, depending on the assumed function classes of $\Psi$ and 
the $\{\varphi_i\}$. In this work, we approach the question from the perspective of algebra. Thus, we consider $\Psi$ and hence
the  $\{\varphi_i\}$  to be   polynomials. 

Let $S^\d\BR^{3*}$ denote the space of homogeneous degree $\d$ polynomials on $\BR^3$.
Consider
\begin{equation*}
  \La N(L^2({\BR^3}))_{alg}:=\La N(\bigoplus_{\d=0}^{\infty} S^\d {\BR^3}^*),
\end{equation*}
where for any given element of $L^2({\BR^3})$, 
we only allow a finite number of $\d$ to be used.  
  Fix a total degree $D$ and
consider $\La N(L^2({\BR^3}))_{alg}\cap {\BR^3}^{*\ot D}$ and call this the $(N,D)$-{\it space}.
Below we will see that for a given $N$, one must have $D$ at least on the order of $N^{\frac 43}$. 

Our main result states that in general totally antisymmetric   polynomials do not belong to $\mc{F}_{\text{BF}}^N$.
We stratify the set  $\mc{F}_{\text{BF}}^N$ by total degree $D$
and write   $\mc{F}_{\text{BF}}^N=\bigoplus_D
\mc{F}_{\text{BF}}^{D,N}$. Then  $ \mc{F}_{\text{BF}}^{D,N}$ is an algebraic subvariety of the $(N,D)$-space
$\La N(L^2({\BR^3}))_{alg}\cap {\BR^3}^{*\ot D}$. Note that the elements
mapping to $\mc{F}_{\text{BF}}^{D,N}$ are the $(\varphi_1\hd \varphi_N)$
such that  $\sum_{j=1}^N\tdeg(\varphi_j)=D$ and each $\varphi_j$ is homogeneous.

\begin{theorem}\label{mainthm} For each fixed $N$, for all $D$ sufficiently large,
the algebraic ansatz map is not surjective. The dimension of the target is
roughly $N^{3N-3}$ times larger than the dimension of the source.

In particular, when $D$ is sufficiently large, for all $r<N^{3N-3}$, the set of  sums of $r$ elements of  
$\mc{F}_{\text{BF}}^{D,N}$ still
 lies  in a proper subvariety of $\La N(L^2({\BR^3}))_{alg}\cap {\BR^3}^{*\ot D}$. In particular, it is a set 
 of measure zero.
\end{theorem}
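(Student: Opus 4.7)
My plan is to obtain the theorem by an asymptotic dimension comparison: I will bound $\dim \mc{F}_{\text{BF}}^{D,N}$ from above and the dimension of the $(N,D)$-space from below, showing the ratio is at least $N^{3N-3}$; the statement about $r$-joins then follows immediately.

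For the target lower bound I will apply Molien's formula to the sign-isotypic component of $S^D((\BR^3)^N)^*$ under the natural $S_N$-action permuting the $N$ copies of $\BR^3$. The identity element contributes the leading pole of order $3N$ at $t=1$, giving $\dim(\text{target})\sim D^{3N-1}/(N!\,(3N-1)!)$. For the source upper bound I will study the parameter space $\mathcal{S}^{d_1}\times\cdots\times\mathcal{S}^{d_N}$ over degree distributions $\sum d_j=D$. Applying Molien to the $S_{N-1}$-invariants on $\BR^3\times(\BR^3)^{N-1}$ yields $\dim\mathcal{S}^d\sim d^{3N-1}/((N-1)!\,(3N-1)!)$. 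The key claim is that the effective upper bound on $\dim\mc{F}_{\text{BF}}^{D,N}$ is realized by the balanced distribution $d_j=D/N$, producing
\[
\dim(\text{source})\;\sim\;N\cdot\frac{(D/N)^{3N-1}}{(N-1)!\,(3N-1)!} \;=\; \frac{D^{3N-1}}{N^{3N-3}\cdot N!\,(3N-1)!}.
\]
Taking the ratio gives $\dim(\text{target})/\dim(\text{source})\sim N^{3N-3}$, establishing non-surjectivity.

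For the final clause, the set of $r$-fold sums of elements of $\mc{F}_{\text{BF}}^{D,N}$ is the $r$-th join $J_r(\mc{F}_{\text{BF}}^{D,N})$, a cone whose dimension is at most $r\cdot\dim\mc{F}_{\text{BF}}^{D,N}$. For any $r<N^{3N-3}$,
\[
\dim J_r \;\leq\; r\cdot\dim\mc{F}_{\text{BF}}^{D,N} \;<\; N^{3N-3}\cdot\dim(\text{source}) \;=\; \dim(\text{target}),
\]
so $J_r$ sits in a proper algebraic subvariety of the $(N,D)$-space and is therefore a set of Lebesgue measure zero.

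The hard part will be justifying the balanced upper bound in step two: the most unbalanced distribution, $d_1=D$ and $d_j=0$ otherwise, would naively give a larger parameter space and weaken the ratio to just $N$. The resolution must exploit the fact that such extreme distributions either force the determinant to vanish identically (for $N\geq 3$, two or more constant $\varphi_j$ produce proportional columns) or push the image into non-generic bidegree strata of the target that are themselves asymptotically smaller than the dominant stratum. A careful stratification-by-bidegree argument, matching each input's bi-multi-degree structure to the resulting multi-degree of $\Phi_{\text{BF}}$ in $(\mathbf{x}_1,\ldots,\mathbf{x}_N)$, is required to confirm that only balanced inputs can densely fill the dimensionally dominant balanced stratum of $\La N(L^2({\BR^3}))_{alg}\cap{\BR^3}^{*\ot D}$.
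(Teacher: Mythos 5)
Your strategy coincides with the paper's: lower-bound the dimension of the target (degree-$D$ antisymmetric polynomials), upper-bound the dimension of the source (the parameter spaces $\prod_j\mc{S}^{d_j}$ over degree distributions $\sum_j d_j=D$), and finish with the naive join/secant dimension count for $r$-fold sums. Your Molien computation is a cleaner and sharper route than the paper's partition-counting estimates: it recovers the constant $N^{3N-3}$ exactly, where the paper's coarser bounds lose factors exponential in $N$ and only claim the ratio ``roughly.'' The treatment of the final clause is identical in both.

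The problem is the step you yourself flag as ``the hard part'' and leave open: justifying that the balanced distribution $d_j=D/N$ controls the source. Since $\tdim\mc{S}^d=\Theta(d^{3N-1})$ is convex in $d$, the quantity $\sum_j\tdim\mc{S}^{d_j}$ subject to $\sum_j d_j=D$ is maximized at the unbalanced extremes, exactly as you observe — and the paper does not resolve this either; it simply asserts that the largest component of the source occurs at $d_j=D/N$, with no argument. So you have identified the genuinely weak point of the published proof, but your two candidate repairs do not close it. The proportional-columns observation only eliminates distributions with two or more constant $\varphi_j$; the distribution $(0,1,\dots,1,D-N+2)$ survives, and its parameter space has dimension $\sim D^{3N-1}/((N-1)!\,(3N-1)!)$, which is about $N$ times the target, so the raw dimension count proves nothing there. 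The multidegree-stratification idea also fails as stated: each $\varphi_j(\bx_i;\ol{\bx_{-i}})$ depends on all $N$ position variables through its symmetric slots, so a single high-degree $\varphi_N$ can spread its degree evenly over $\bx_1,\dots,\bx_N$, and the image of an unbalanced ansatz map is not confined to a small multidegree stratum of the target. To complete the argument one must show that the ansatz map has large fibers on unbalanced strata — for instance, that the linear map $\varphi_N\mapsto\sum_i(-1)^{i+N}\varphi_N(\bx_i;\ol{\bx_{-i}})M_i(\bx)$, with the low-degree columns fixed, has rank well below $\tdim\mc{S}^{d_N}$. Neither your proposal nor the paper supplies such an argument; on this point your write-up is no less complete than the paper's, but it is a gap you would have to fill to have a proof.
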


Theorem \ref{mainthm} will follow immediately from the upper bound on the dimension of the source
in \S\ref{source} and the lower bound on the dimension of the target in \S\ref{target}.

\begin{remark} Our estimates are coarse, but they only  assume $D> N^3$. The map will still fail to be surjective
for all but very few admissible values of $D$.
\end{remark}

\begin{remark} The ansatz map has positive dimensional fibers. It would be interesting to determine their dimensions.
\end{remark}

\begin{remark} The closure of the image of the ansatz map is some variety invariant under the action
of $GL_3$. It would be interesting to obtain geometric information about this variety. 
\end{remark}

Our theorem, to some extent, is a negative result for the
representation power of the backflow transformation ansatz, at least
in the category of polynomials, since the number of elements needed
grows exponentially in $N$. Note that any analytic
function is a limit of a sequence of polynomials. Restriction to
homogeneous polynomials is not a restriction as we may always project
to homogeneous components, which will be the images of homogeneous
$\phi_i$.

We remark that recent work \cite{hutter2020representing} argues that
any totally anti-symmetric function can be represented by the backflow
ansatz (see \cite[Theorem 7]{hutter2020representing}), however the
$\varphi_i$ used in the construction involve a sorting of coordinates
$\bx_i$ in ``lexicographical'' order, and are hence discontinuous and
also impractical for actual computations.

\subsection*{Organization} In \S\ref{prelim} we review standard results needed for the proof.
 In \S\ref{target} we bound the dimension of the target from below and
in \S\ref{source} we bound the dimension of the source from above. The two estimates
together prove Theorem \ref{mainthm}. We conclude in \S\ref{geom} with geometric remarks.

\subsection*{Acknowledgement}
This project is an outcome of the IPAM program {\it Tensor Methods and
  Emerging Applications to the Physical and Data Sciences} March 8,
2021 to June 11, 2021. The authors thank IPAM for bringing us
(virtually) together and Landsberg thanks the Clay foundation for the
opportunity to serve as a senior scholar during the program. 
   
 \section{Preliminaries}\label{prelim}
 Let  $\ol p_{k}(m)$ denote the number of partitions of $m$ with at most $k$ parts.
 The following estimates are standard. We include proofs for the convenience of the reader.

\begin{proposition} \label{pkprop}
 $\ol p_k(m)= \frac 1{k!(k-1)!} m^{k-1} + O(m^{k-2}).$
\end{proposition}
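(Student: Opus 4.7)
The plan is to establish the asymptotic via generating functions. By the standard identity for partitions with at most $k$ parts,
$$\sum_{m\geq 0} \ol p_k(m)\, q^m \;=\; \prod_{i=1}^{k} \frac{1}{1-q^i}.$$
This rational function (after multiplying out) has poles only at $j$-th roots of unity for $j=1,\ldots,k$; the pole at $q=1$ has order exactly $k$, while at any other root of unity the order is at most $k-1$, since $q=1$ is the only common zero of all the factors $1-q^i$.

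The next step is a partial fraction decomposition
$$\prod_{i=1}^{k}\frac{1}{1-q^i} \;=\; \frac{A}{(1-q)^k} \;+\; R(q),$$
where $R(q)$ collects the terms coming from poles of lower order at $q=1$ and from all poles at the other roots of unity. The leading coefficient $A$ is computed as a limit:
$$A \;=\; \lim_{q\to 1}\, (1-q)^k\prod_{i=1}^{k}\frac{1}{1-q^i} \;=\; \prod_{i=1}^{k}\lim_{q\to 1}\frac{1-q}{1-q^i} \;=\; \prod_{i=1}^{k}\frac{1}{i} \;=\; \frac{1}{k!}.$$
Extracting the coefficient of $q^m$ from the leading term gives
$$[q^m]\,\frac{A}{(1-q)^k} \;=\; \frac{1}{k!}\binom{m+k-1}{k-1} \;=\; \frac{m^{k-1}}{k!(k-1)!} \;+\; O(m^{k-2}).$$

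It remains to verify that $[q^m]R(q) = O(m^{k-2})$. Each remaining partial-fraction term has the form $c/(1-\zeta q)^{\ell}$ with either $\zeta=1$ and $\ell\leq k-1$, or $\zeta$ a root of unity different from $1$ and $\ell\leq k-1$. In both cases the coefficient of $q^m$ is a bounded constant times $\zeta^{m}\binom{m+\ell-1}{\ell-1}=O(m^{\ell-1})=O(m^{k-2})$, and summing the finitely many such contributions preserves the $O(m^{k-2})$ bound. Combining with the leading term yields the claim.

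The only delicate point is the bookkeeping for the subleading poles; once the partial-fraction framework is in place, this is a routine check, so the main obstacle is purely notational rather than conceptual.
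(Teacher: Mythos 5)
Your proof is correct, but it takes a genuinely different route from the paper. The paper argues by induction on $k$: it uses the recursion $\ol p_k(m) = \sum_{i=0}^{\lfloor m/k\rfloor} \ol p_{k-1}(m-ik)$, substitutes the inductive asymptotic $\ol p_{k-1}(n) = c_{k-1}n^{k-2}+O(n^{k-3})$, and evaluates the resulting power sum with Faulhaber's formula to get the recursion $c_k = \frac{1}{k(k-1)}c_{k-1}$, hence $c_k = \frac{1}{k!(k-1)!}$. You instead use the generating function $\prod_{i=1}^k (1-q^i)^{-1}$ and a partial fraction decomposition, isolating the order-$k$ pole at $q=1$ with leading coefficient $1/k!$ and checking that all other poles (at $q=1$ of lower order, and at roots of unity $\zeta\neq 1$, where the order is at most $k-1$ because the factor $1-q$ does not vanish there) contribute only $O(m^{k-2})$. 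Your approach is the classical analytic one and is in some sense stronger: it exposes the full quasi-polynomial structure of $\ol p_k(m)$ and would let you extract lower-order terms if desired. The paper's induction is more elementary and self-contained --- it needs only Faulhaber's formula and no partial fractions over roots of unity --- which suits its role as a quick preliminary. Both yield exactly the claimed estimate.
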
 
\begin{proof} 
Recall Faulhaber's formula:  $1^k+2^k+\cdots + m^k=\frac{m^{k+1}}{k+1}+O(m^k)$.

We have the induction formula 
$$
    \ol p_k (m)  = \sum_{i = 0}^{\lfloor \frac{m}{k} \rfloor} \ol p_{k-1}(m - ik) .
$$
When $k = 1$, we have $\ol p_k(m) = 1$.
Assume by induction that 
$\ol p_u(m)=c_u m^{u-1}+O(m^{u-2})$ for all $u<k$.
We prove it holds for $k$ and   $c_u=\frac 1{u!(u-1)!}$.
\begin{align*}
    \ol p_k (m) &= \sum_{i = 0}^{\lfloor \frac{m}{k} \rfloor} \ol p_{k-1}(m - ik) \\
    & = \sum_{i = 0}^{\lfloor \frac m{k} \rfloor} c_{k-1} (m-ik)^{k-2} + O((m-ik)^{k-3}) \\
    &  = k^{k-2}c_{k-1}\sum_{i = 0}^{\lfloor \frac m{k} \rfloor}(\frac mk-i)^{k-2}+ O((m-ik)^{k-3})\\
    &=  k^{k-2} c_{k-1}\frac{1}{k-1}(\frac mk)^{k-1}+O(m^{k-2})\\
    &= \frac 1{k (k-1) }c_{k-1} m^{k-1}+O(m^{k-2}).
      \end{align*} 
We conclude by induction.  
\end{proof}

\begin{proposition} Let $\ol q_k(m)$ denote the number of  partitions of 
$m$ with either  $k$ or $k-1$ parts that are strictly decreasing. Then  $\ol q_k(m+\binom{k}2)=\ol p_k(m)$.
\end{proposition}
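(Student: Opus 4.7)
The plan is to exhibit an explicit bijection between partitions of $m$ with at most $k$ parts and partitions of $m+\binom{k}{2}$ with either $k$ or $k-1$ strictly decreasing (positive) parts, via the classical ``staircase shift.''

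First, given any $\lambda$ counted by $\ol p_k(m)$, pad it with zeros to length exactly $k$ so that $\lambda_1\geq \lambda_2\geq \cdots \geq \lambda_k\geq 0$ and $\sum_i \lambda_i = m$. Define
\[
\mu_i := \lambda_i + (k-i), \qquad i=1,\ldots,k.
\]
Then $\mu_1>\mu_2>\cdots>\mu_k\geq 0$ because adding the strictly decreasing staircase $(k-1,k-2,\ldots,0)$ to a weakly decreasing sequence yields a strictly decreasing sequence, and $\sum_i \mu_i = m + \binom{k}{2}$.

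Next, I would split into the two cases determined by whether $\mu_k=0$ or $\mu_k\geq 1$. If $\lambda_k\geq 1$ (so $\lambda$ has exactly $k$ positive parts), then $\mu_k\geq 1$ and $\mu$ is a strictly decreasing partition of $m+\binom{k}{2}$ with exactly $k$ positive parts. If $\lambda_k=0$ (so $\lambda$ has strictly fewer than $k$ parts), then $\mu_k=0$, and discarding this trailing zero leaves $\mu_1>\cdots>\mu_{k-1}\geq 1$, a strictly decreasing partition of $m+\binom{k}{2}$ with exactly $k-1$ positive parts (using $\mu_{k-1}\geq \mu_k+1=1$). This shows the map lands in the set counted by $\ol q_k(m+\binom{k}{2})$.

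Finally, I would write down the inverse: given $\mu$ strictly decreasing with either $k$ or $k-1$ positive parts summing to $m+\binom{k}{2}$, append a $0$ in the latter case so $\mu$ has length exactly $k$, then set $\lambda_i := \mu_i - (k-i)$. Strict decrease of $\mu$ implies $\lambda_i\geq \lambda_{i+1}$, and $\mu_k\geq 0$ implies $\lambda_k\geq 0$, so $\lambda$ is a partition of $m$ with at most $k$ parts. The two maps are manifestly inverse to one another, so the two sets are in bijection and the claimed identity follows. No step is a genuine obstacle; the only thing to be careful about is the bookkeeping between ``parts'' and ``positive parts'' when $\mu_k=0$, which the case split above handles cleanly.
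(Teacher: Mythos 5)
Your proof is correct and is essentially the same argument the paper gives: the staircase shift $\lambda_i \mapsto \lambda_i + (k-i)$, which the paper states in one sentence and you carry out with the case analysis and explicit inverse spelled out. No issues.
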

\begin{proof} Given a partition of $m$ with at most $k$ parts, we may obtain a new   partition that is strictly decreasing
by adding one to the $(k-1)$-st part, two to the $(k-2)$-nd, up to $(k-1)$ to the first.
Moreover all   strictly decreasing partitions with $k$ or $k-1$ parts arise in this way so we have established a bijection.
\end{proof}

 Recall from Stirling's formula that 
 $n!=\sqrt{2\pi n}(\frac ne)^n(1+O(\frac 1n))$ so in particular
 $ n!(n-1)! =2\pi e^{-2n+1}n^{2n}(1+O(\frac 1n))$.
 
 In what follows we will be concerned with estimates so we suppress round ups and round downs to
 integers from the notation.

 A basic fact from algebraic geometry is that the dimension of a finite union of algebraic
varieties is the dimension of the largest component.
Because of this we may restrict to homogeneous polynomials and a single
source.

\section{Target space}\label{target}

We fix a total degree $D$ and lower bound the dimension of
$$
\bigoplus_{\stack{0\leq p_1\leq p_2\leq \cdots\leq p_N}{p_1+\cdots +p_N=D}}
S^{p_1}W^*\otc S^{p_N}W^*.
$$

We only consider terms where
 $p_1\geq D/2N$ and $p_1<p_2<\cdots <p_N$. The sum becomes
 $$
 \sum_{\stack{q_1+\cdots + q_N=D/2}{ q_1< q_2<\cdots < q_N}}
 S^{q_1+\frac D{2N}}W^*\ot \cdots \ot S^{q_N+\frac D{2N}}W^*.
 $$
  Recall that $\tdim S^dW^*=\binom{d+1}2$. 
 The smallest term in the summation is when $q_j=j-1$ for $j<N$ and $q_N=\frac{D}{2 } -\binom{N-1}2$.
 Assume $D>N^3$. 
 This term has
 dimension
 $$
\binom{\frac{N+1}{2N}D-\binom{N-1}2+2}2 \prod_{j=0}^{N-2} \binom{j+\frac D{2N}+2}2
=\frac{D^{2N}}{2^{3N}N^{2(N-1)}}+O(D^{2N-1})
$$
 
 The number of terms is 
 $$
 \ol q_N(\frac D2)=\frac 1{N!(N-1)!}(\frac D2+\frac{N(N-1)}2)^{N-1}+O((\frac D2+\frac{N(N-1)}2))^{N-2}
 =\frac{D^{N-1}e^{2N-1}}{\pi 2^{N }N^{2N }}+O(D^{N-2})
 $$ 
 
 Thus the dimension of the target is bounded below by
 
  $$
 [ \frac{D^{N-1}e^{2N-1}}{\pi 2^{N }N^{2N }}+O(D^{N-2})][\frac{D^{2N}}{2^{3N}N^{2(N-1)}}+O(D^{2N-1})]
 =
 \frac{D^{3N-1}e^{2N-1}}{\pi 2^{4N}N^{4N-2}}+O(D^{3N-2})
 $$

\section{Source space}\label{source}

Let $d_j=\tdeg\varphi_j$, then  $\mc{F}_{\text{BF}}^{D,N}$
is the union  of the images of the  ansatz maps over all $(\varphi_1\hd \varphi_N)$ with $d_1\leq \cdots \leq d_N$
and $\sum d_j=D$.  By the basic fact in algebraic geometry
mentioned above, the ansatz surjects onto the $(N,D)$ space if and only if one of the
$(d_1\hd d_N)$-ansatz maps does.  

Fix $d=d_j$ and suppress the $j$ index. Write
$$\varphi=\varphi(\bx_1;\ol{\bx_{-1}})=\sum_{z=0}^d\sum_{\stack{\d_2+\cdots + \d_N=d-z}{0\leq \d_2\leq \d_3\leq \cdots \leq \d_N }}  f^{z,\ol\d}_z(\bx_1) 
\sum_{\s\in \FS_{N-1}}
h^{z,\ol\d}_2( \bx_{\s(2)})\cdots h^{z,\ol\d}_N( \bx_{\s(N)})
$$
Here $\ol\d=(\d_2\hd \d_N)$,  $f^{z,\ol\d}_z$ has degree $z$, and $h^{z,\ol\d}_j$ has degree $\d_j$.
We are assuming without loss of generality that the $h$'s  are non-decreasing in degree with $j$
because they appear symmetrically. On the other hand, we have to allow the degree of the $f$'s to be
any   value from $0$ to $d$.

 The dimension of the source $\varphi=\varphi_j$ with $j$ fixed and $d=d_j$ is thus
$$
 \sum_{z=0}^d \sum_{\stack{\d_2+\cdots + \d_N=d-z}{0\leq \d_2\leq \d_3\leq \cdots \leq \d_N }}  
[\binom{z+2}2 \binom{\d_2+2}2 \cdots   \binom{\d_N+2}2]
$$
 where   $\binom{z+2}2$ is the dimension of the space of  $f_z$'s of degree $z$ and 
 $\binom{\d_j+2}2$ is the dimension of the space of $h_{\d_j}$'s of degree $\d_j$.
 
 The largest component of the source will be when $d_j=\frac DN=:d$ for all $j$. 
 We have the following upper bound for the dimension of the source:
 \begin{align*}
 &N\sum_{z=0}^d \sum_{\stack{\d_2+\cdots + \d_N=d-z}{0\leq \d_2\leq \d_3\leq \cdots \leq \d_N }}  
[\binom{z+2}2 \binom{\d_2+2}2 \cdots   \binom{\d_N+2}2]\\
&<N \sum_{z=0}^d   
 \binom{z+2}2 {\binom{\frac{d-z}{N-1}+2}2}^{N-1} \ol p_{N-1}(d-z)\\
 &<
 Nd{\binom{\frac{d}{N}+2}2}^{N}  \ol p_{N-1}(d)\\
 &=Nd
 \frac{d^{2N}}{2^{N}N^{2N}}
 \frac{d^{N-2}e^{2N-3}}{2\pi N^{2N-2}}+O(d^{3N-2})\\
 &=
 \frac{D^{3N-1}e^{2N-3}}{N^{7N-4}2^{N+1}\pi}+O(D^{3N-2})
 \end{align*}
 
 The second line holds because the product of the binomial coefficients with the $\d_i$'s 
 is largest when they are all equal. The third holds because the largest term in the second line occurs
 when $z=\frac dN$, and the last by using approximations to the terms.

 \section{Geometric Discussion}\label{geom}
 Given a vector space $V$ and an algebraic subset $X\subset V$, one can 
 define the variety of points on $r$-secant planes of $X$:
 $$
 \sect_r(X):=\ol{\{ v\in V\mid \exists x_1\hd x_r\in X, v\in \tspan \{ x_1\hd x_r\}\} }
 $$
 For any $X$, a na\"\i ve dimension count gives 
 \be\label{secdim}\tdim \sect_r(X)\leq r\tdim(X)+r,
 \ene
  as one chooses
 $r$ points on $X$ and a point in their span. These secant varieties (or more precisely, their
 cousins in projective space) are intensely studied in algebraic geometry.
 
  In our case a natural question is to take $X$ to be the image of an ansatz map  
 and to ask when its secant varieties fill the ambient space.

\bibliographystyle{plain}
\bibliography{ref.bib}

\begin{thebibliography}{10}

\bibitem{acevedo2020vandermonde}
Alberto Acevedo, Michael Curry, Shantanu~H Joshi, Brett Leroux, and Nicholas
  Malaya.
\newblock Vandermonde wave function ansatz for improved variational monte
  carlo.
\newblock In {\em 2020 IEEE/ACM Fourth Workshop on Deep Learning on
  Supercomputers (DLS)}, pages 40--47. IEEE, 2020.

\bibitem{carleo2017solving}
Giuseppe Carleo and Matthias Troyer.
\newblock Solving the quantum many-body problem with artificial neural
  networks.
\newblock {\em Science}, 355(6325):602--606, 2017.

\bibitem{choo2020fermionic}
Kenny Choo, Antonio Mezzacapo, and Giuseppe Carleo.
\newblock Fermionic neural-network states for ab-initio electronic structure.
\newblock {\em Nature communications}, 11(1):1--7, 2020.

\bibitem{feynman1956energy}
RP~Feynman and Michael Cohen.
\newblock Energy spectrum of the excitations in liquid helium.
\newblock {\em Physical Review}, 102(5):1189, 1956.

\bibitem{han2019solving}
Jiequn Han, Linfeng Zhang, and Weinan E.
\newblock Solving many-electron {Schr{\"o}dinger} equation using deep neural
  networks.
\newblock {\em Journal of Computational Physics}, 399:108929, 2019.

\bibitem{hermann2020deep}
Jan Hermann, Zeno Sch{\"a}tzle, and Frank No{\'e}.
\newblock Deep-neural-network solution of the electronic schr{\"o}dinger
  equation.
\newblock {\em Nature Chemistry}, 12(10):891--897, 2020.

\bibitem{hutter2020representing}
Marcus Hutter.
\newblock On representing (anti) symmetric functions.
\newblock {\em arXiv preprint arXiv:2007.15298}, 2020.

\bibitem{luo2019backflow}
Di~Luo and Bryan~K Clark.
\newblock Backflow transformations via neural networks for quantum many-body
  wave functions.
\newblock {\em Physical review letters}, 122(22):226401, 2019.

\bibitem{pfau2020abinitio}
David Pfau, James~S Spencer, Alexander~GDG Matthews, and W~Matthew~C Foulkes.
\newblock Ab initio solution of the many-electron schr{\"o}dinger equation with
  deep neural networks.
\newblock {\em Physical Review Research}, 2(3):033429, 2020.

\bibitem{schatzle2021convergence}
Zeno Sch{\"a}tzle, Jan Hermann, and Frank No{\'e}.
\newblock Convergence to the fixed-node limit in deep variational monte carlo.
\newblock {\em The Journal of Chemical Physics}, 154(12):124108, 2021.

\bibitem{spencer2020better}
James~S Spencer, David Pfau, Aleksandar Botev, and W~Matthew~C Foulkes.
\newblock Better, faster fermionic neural networks.
\newblock {\em arXiv preprint arXiv:2011.07125}, 2020.

\end{thebibliography}
\end{document}